\documentclass[runningheads]{llncs}

\usepackage[T1]{fontenc}
\usepackage{amsmath,amssymb}
\usepackage{mathtools}
\usepackage{booktabs}
\usepackage{graphicx}
\usepackage{placeins}
\usepackage{xurl}
\usepackage[hidelinks]{hyperref}
\newcommand{\F}{\mathbb F}

\begin{document}

\title{Zeta-Transform Evaluation for Higher-Order Vanishing Key Recovery}
\titlerunning{Zeta-Transform Evaluation for Higher-Order Vanishing Key Recovery}

\author{
  Sunyeop Kim\inst{1,2}
  \and
  Insung Kim\inst{1}
}

\authorrunning{S. Kim and I. Kim}

\institute{
  Korea University, Republic of Korea\\
  \email{\{sunyeopkim,cmcom35\}@korea.ac.kr}
  \and
  Nanyang Technological University, Singapore\\
}

\maketitle

\begin{abstract}
Hemmert's key-recovery algorithm for Classic McEliece is based on
higher-order vanishing. It computes a basis of
$\ker(\widetilde\varphi_A^{(p)})$, where $A=H'''$ is the shortened
parity-check matrix used in the attack. For Classic McEliece parameters,
this kernel computation is the dominant cost of the attack. We show that
the sums defining $\widetilde\varphi_A^{(p)}$ can be evaluated, column
by column, as weighted upper zeta transforms on the Boolean lattice.
Since only selected levels of these transforms are required by
$\widetilde\varphi_A^{(p)}$, restricting their evaluation to the band
between level $p$ and the lowest required level yields exact evaluations
of both $\widetilde\varphi_A^{(p)}$ and its transpose. Using the
resulting truncated zeta-transform evaluation in the Wiedemann-based
kernel computation reduces the cost of the repeated matrix--vector
products without changing the overall key-recovery algorithm. The exact
cost depends on the weight distribution of the non-pivot columns of
$H'''$. We therefore consider two models: an all-one model, in which
every relevant binary coordinate is active, and a Bernoulli$(1/2)$
model, in which the coordinates are independently active with
probability $1/2$. For the five Classic McEliece parameter sets, our
method reduces the estimated key-recovery cost by $14.09$--$41.19$ bits
in the all-one model and by $7.25$--$22.48$ bits in the
Bernoulli$(1/2)$ model.
\keywords{Classic McEliece \and higher-order vanishing \and zeta transform
  \and Wiedemann algorithm \and key recovery}
\end{abstract}

\section{The higher-order vanishing map}

We use the notation in~\cite[Definitions~4 and~5]{cryptoeprint:2026/1339}.
Let $q$ be a power of two, and let $k,n,p$ be positive integers with
$k\le n$ and $2\le p\le k$,
\[
 R_k\coloneqq \F_q[X_1,\ldots,X_k],
 \qquad [k]\coloneqq \{1,\ldots,k\}.
\]
Here $R_k^{(p)}$ denotes the $\F_q$-vector space of homogeneous
polynomials of degree $p$ in $R_k$, and $U_k^{(p)}\subset R_k^{(p)}$ is
the subspace spanned by its square-free monomials. For an integer $t$,
write
\[
 \binom{[k]}t\coloneqq \{S\subseteq[k]:|S|=t\}.
\]
Here $|S|$ denotes the cardinality of a finite set $S$.
Let $A\in\F_q^{k\times n}$ have rank $k$. Let
$\widetilde A\in\F_q^{k\times(n-k)}$ be obtained from the reduced
row-echelon form of $A$ by removing its pivot columns, and let
$\widetilde a_1,\ldots,\widetilde a_{n-k}$ denote the columns of
$\widetilde A$; write $\widetilde a_{j,i}$ for coordinate $i$ of column
$\widetilde a_j$. The following map is introduced
in~\cite[Definition~5]{cryptoeprint:2026/1339}:
\[
 \widetilde\varphi_A^{(p)}:
 U_k^{(p)}\longrightarrow
 \prod_{j=1}^{n-k}\F_q^{M_{k,p}},
 \qquad
 M_{k,p}\coloneqq
 \sum_{\ell=1}^{\lfloor\log_2p\rfloor}
 \binom{k}{p-2^\ell},
\]
whose entries are evaluations at $\widetilde a_j$ of the square-free
partial derivatives of orders $p-2^\ell$.

For
\[
 f=\sum_{S\in\binom{[k]}p}c_SX_S\in U_k^{(p)},
 \qquad X_S=\prod_{i\in S}X_i,
\]
the entry indexed by a column $\widetilde a_j$, an integer
$1\le\ell\le\lfloor\log_2p\rfloor$, and a set
$D=\{i_1<\cdots<i_{p-2^\ell}\}\in
\binom{[k]}{p-2^\ell}$ is
\begin{equation}
 \frac{\partial^{p-2^\ell}f}
      {\partial X_{i_1}\cdots\partial X_{i_{p-2^\ell}}}
      (\widetilde a_j)
 =
 \sum_{\substack{S\supseteq D\\|S|=p}}
 c_S\prod_{i\in S\setminus D}\widetilde a_{j,i}.
 \label{eq:hov-entry}
\end{equation}

We next specialize this notation to the key-recovery setting. Let
\[
 H\in\F_2^{mr\times n}
\]
be the public binary parity-check matrix of a binary Goppa code of
length $n$ over $\F_{2^m}$, defined by a Goppa polynomial of degree
$r$. For a permutation matrix $P\in\F_2^{n\times n}$, the procedure
in~\cite[Algorithm~4]{cryptoeprint:2026/1339} sets
\[
 H'\coloneqq HP,
 \qquad
 H''\coloneqq \operatorname{RREF}(H'),
\]
where $\operatorname{RREF}$ denotes reduced row-echelon form. Letting
$s$ be the shortening parameter returned together with $p$
in~\cite[Algorithm~5]{cryptoeprint:2026/1339}, $H'''$ is the matrix
obtained by deleting the first $s$ rows of $H''$. Under the independence
assumption in~\cite[Algorithm~4]{cryptoeprint:2026/1339}, $H'''$ has
rank $mr-s$. We now specialize the preceding general notation by setting
\begin{equation}
 A\coloneqq H''',
 \qquad k\coloneqq mr-s.
 \label{eq:specialization}
\end{equation}
The $s$ zero columns corresponding to the shortened positions contribute
no nonzero entries and may be omitted, leaving $n-mr$ contributing
columns. Let
\[
 d\coloneqq \dim\ker\!\left(\widetilde\varphi_{H'''}^{(p)}\right).
\]
The key-recovery algorithm computes a basis
\[
 \{v_1,\ldots,v_d\}
 \quad\text{of}\quad
 \ker\!\left(\widetilde\varphi_{H'''}^{(p)}\right).
\]

For Classic McEliece parameters, this kernel computation is the dominant
cost of the key recovery. We therefore focus on the Wiedemann-based
computation of $\ker(\widetilde\varphi_{H'''}^{(p)})$. For a detailed
description of the complete key-recovery attack based on higher-order
vanishing, we refer the reader to~\cite{cryptoeprint:2026/1339}. Our
modification changes only how $\widetilde\varphi_{H'''}^{(p)}$ and its
transpose are evaluated; all other steps of the key-recovery algorithm
remain unchanged.

\section{Truncated evaluation of a weighted zeta transform}

The sum in~\eqref{eq:hov-entry} has the form of a coordinate-weighted
upper zeta transform on the Boolean lattice
\[
 \bigl(2^{[k]},\subseteq\bigr),
\]
where $2^{[k]}$ is the set of all subsets of $[k]$ and the partial order
is set inclusion. We adapt the coordinatewise factorization underlying
the standard fast zeta-transform
algorithm~\cite{doi:10.1137/070683933,148425} to evaluate this weighted
sum.

Fix $j\in\{1,\ldots,n-k\}$ and write
$a=\widetilde a_j=(a_1,\ldots,a_k)$. Let
$z:2^{[k]}\to\F_q$ and define
\[
  (Z_a z)(T)
  \coloneqq
  \sum_{S\supseteq T}
  z(S)\prod_{i\in S\setminus T}a_i.
\]

The map $\widetilde\varphi_A^{(p)}$ does not require the values of this
transform on every level of the Boolean lattice. For each column
$\widetilde a_j$, only the levels $p-2^\ell$, where
$1\le\ell\le\lfloor\log_2p\rfloor$, occur in its output. Put
\[
  \ell_{\max}\coloneqq\lfloor\log_2p\rfloor,
  \qquad
  L\coloneqq p-2^{\ell_{\max}},
  \qquad
  \mathcal B_{k,p}
  \coloneqq
  \{T\subseteq[k]:L\le |T|\le p\}.
\]
The smallest required level is $L$. Since the coordinatewise
factorization passes through the intermediate levels between $p$ and
$L$, it is sufficient to maintain the band $\mathcal B_{k,p}$.

Starting with the coefficients of $f$ on level $p$, the following
theorem shows that the restriction of the weighted zeta transform to
the required levels can be evaluated entirely within this band and that
these levels give exactly the corresponding entries of
$\widetilde\varphi_A^{(p)}(f)$.

\begin{theorem}[Weighted zeta factorization]
Fix an ordering $u_1,\ldots,u_k$ of $[k]$, and maintain an array
$z(T)$ indexed by $T\in\mathcal B_{k,p}$. Initialize
\[
 z(T)=
 \begin{cases}
   c_T, & |T|=p,\\
   0,   & L\le |T|<p.
 \end{cases}
\]
For $t=1,\ldots,k$, set $i=u_t$ and, for every
$T\in\mathcal B_{k,p}$ satisfying $i\notin T$ and $|T|<p$, perform the
butterfly operation
\begin{equation}
  z(T)\gets z(T)+a_i z(T\cup\{i\}).
  \label{eq:butterfly}
\end{equation}
Then, for every $1\le\ell\le\ell_{\max}$ and
\[
 D=\{i_1<\cdots<i_{p-2^\ell}\}
   \in\binom{[k]}{p-2^\ell},
\]
we have
\[
 z(D)=
 \frac{\partial^{p-2^\ell}f}
      {\partial X_{i_1}\cdots\partial X_{i_{p-2^\ell}}}
      (\widetilde a_j),
\]
which is the corresponding entry of
$\widetilde\varphi_A^{(p)}(f)$.
\end{theorem}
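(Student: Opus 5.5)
We argue by induction on the number $t$ of processed coordinates, using an explicit closed form for the array after each stage. For $0\le t\le k$ let $P_t\coloneqq\{u_1,\ldots,u_t\}$. The invariant we will establish is that after stage $t$, for every $T\in\mathcal B_{k,p}$,
\[
 z_t(T)=\sum_{\substack{S\supseteq T,\ |S|=p\\ S\setminus T\subseteq P_t}} c_S\prod_{i\in S\setminus T}a_i .
\]
At $t=k$ the condition $S\setminus T\subseteq[k]$ is vacuous. For $T=D$ with $|D|=p-2^\ell$, the right-hand side is then exactly the sum in~\eqref{eq:hov-entry}, which gives the claim. Since $L\le p-2^\ell<p$, every such $D$ lies in the band $\mathcal B_{k,p}$.

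For the base case $t=0$, the condition $S\setminus T=\emptyset$ forces $S=T$. The sum therefore equals $c_T$ when $|T|=p$ and $0$ otherwise, which matches the initialization.

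For the inductive step, fix $t\ge1$ and $i=u_t$, and consider $T\in\mathcal B_{k,p}$.
\begin{itemize}
\item If $|T|=p$, the array entry is never modified. The formula also gives $c_T$ for every $t$, because $S\supseteq T$ with $|S|=p$ forces $S=T$.
\item If $i\in T$, the entry is unchanged. The formula is also unchanged, because $i\notin S\setminus T$, so adding $i$ to the allowed set $P_t$ changes nothing.
\item If $i\notin T$ and $|T|<p$, then $T\cup\{i\}$ has size at most $p$ and lies in the band. We split the sum defining $z_t(T)$ according to whether $i\in S$. The terms with $i\notin S$ give exactly $z_{t-1}(T)$. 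For the terms with $i\in S$, we write $S\setminus T=\{i\}\cup(S\setminus(T\cup\{i\}))$ and factor out $a_i$; the remaining sum is precisely $z_{t-1}(T\cup\{i\})$. This matches the butterfly~\eqref{eq:butterfly}.
\end{itemize}

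The one point requiring care is the order of updates within a stage. The butterfly for $T$ must read the stage-$(t-1)$ value of $z(T\cup\{i\})$. This holds because $i\in T\cup\{i\}$, so that entry is not updated during stage $t$. Consequently the updates within a stage are independent and may be performed in any order, in place.

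We also note why truncation to the band is harmless. The recursion for a level-$|T|$ entry only ever reads the entry at level $|T|+1$. The values at levels below $L$ are therefore never needed, and the values above $p$ are zero because $f$ is homogeneous of degree $p$.
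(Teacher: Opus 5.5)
Your proof is correct and follows essentially the same route as the paper: the same invariant (your $P_t$ is the paper's $J_t$), the same base case, the same split of the sum according to whether $i\in S$, and the same observation that an update at level $|T|$ reads only level $|T|+1$, so the computation stays inside the band. You also state explicitly a point the paper leaves implicit: $z(T\cup\{i\})$ contains $i$, so it is not modified during stage $t$, and therefore the in-place updates within a stage may be done in any order.
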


\begin{proof}
For $0\le t\le k$, put
\[
 J_0\coloneqq\varnothing,
 \qquad
 J_t\coloneqq\{u_1,\ldots,u_t\},
\]
and let $z_t$ denote the state after the first $t$ coordinates have
been processed. We claim that, for every $T\in\mathcal B_{k,p}$,
\begin{equation}
 z_t(T)=
 \sum_{\substack{
       S\in\binom{[k]}p\\
       S\supseteq T\\
       S\setminus T\subseteq J_t}}
 c_S\prod_{h\in S\setminus T}a_h.
 \label{eq:zeta-invariant}
\end{equation}
For $t=0$, this is exactly the initialization. Suppose that the claim
holds for $t-1$, and write $i=u_t$. If $i\in T$ or $|T|=p$, no update
is performed, and the right-hand side of~\eqref{eq:zeta-invariant} is
unchanged. Otherwise, split the
contributing supersets according as they omit or contain $i$. The first
part is $z_{t-1}(T)$, while the second part is
$a_i z_{t-1}(T\cup\{i\})$. This is precisely the update
in~\eqref{eq:butterfly}, proving the invariant.

Every update with $L\le |T|<p$ reads the entry indexed by
$T\cup\{i\}$, whose cardinality is $|T|+1$. Hence the source also lies
in $\mathcal B_{k,p}$, and no entry outside the band is required.

Finally, $J_k=[k]$, so
\[
 z_k(T)=
 \sum_{\substack{S\supseteq T\\|S|=p}}
 c_S\prod_{h\in S\setminus T}a_h.
\]
For $|T|=p-2^\ell$, this is exactly~\eqref{eq:hov-entry}.
\qed
\end{proof}

The theorem gives a forward evaluation of the output block associated
with a single column $\widetilde a_j$. Since each butterfly operation is
linear, the transpose evaluation is obtained by processing the
coordinates in reverse order and replacing~\eqref{eq:butterfly} by
\[
 z(T\cup\{i\})
 \gets
 z(T\cup\{i\})+a_i z(T).
\]
The input is placed on the levels $p-2^\ell$, with all other band
entries initialized to zero, and the resulting level-$p$ entries form
the transpose output. Applying this procedure to each column block and
summing the results gives the transpose of the full map. Thus both
$\widetilde\varphi_A^{(p)}$ and its transpose can be evaluated exactly.
These evaluations replace the corresponding matrix--vector products in
the Wiedemann-based kernel computation~\cite{1057137} used in the
key-recovery attack~\cite{cryptoeprint:2026/1339}, without otherwise
changing the overall algorithm.

\section{Cost estimates}

We now use the notation
in~\cite[Proposition~13]{cryptoeprint:2026/1339}. Let $p$ and $s$ be the
parameters returned in~\cite[Algorithm~5]{cryptoeprint:2026/1339}. For
the matrix $H'''$ occurring
in~\cite[Algorithm~4]{cryptoeprint:2026/1339}, set the following two
dimensions:
\begin{equation}
 N_0\coloneqq (n-mr)
 \sum_{\ell=1}^{\lfloor\log_2p\rfloor}
 \binom{mr-s}{p-2^\ell},
 \qquad
 N_1\coloneqq \binom{mr-s}{p}.
 \label{eq:N0N1}
\end{equation}
After omitting the $s$ identically zero output blocks described above,
the resulting matrix representation of $\widetilde\varphi_{H'''}^{(p)}$
has $N_0$ rows and $N_1$ columns. Removing these zero blocks does not
change the kernel. The cost of the Wiedemann computation is estimated
in~\cite[Proposition~13]{cryptoeprint:2026/1339} by
\begin{equation}
c_{\mathrm{Wied}}(m,r,n,p)
\coloneqq N_1\left(
\underbrace{
N_1(n-mr)
\sum_{\ell=1}^{\lfloor\log_2p\rfloor}
\binom{p}{2^\ell}
}_{\substack{\text{sparse matrix--vector}\\\text{product cost}}}
+N_0\log(N_0)\right).
\label{eq:published-cost}
\end{equation}

The sparse matrix--vector product term in Hemmert's estimate counts
every potentially contributing product of binary coordinates as nonzero.
This is equivalent to
setting all involved binary coordinates to one; we call this the
\emph{all-one model}. The truncated zeta-transform evaluation changes
only this term. Since the sparse-matrix evaluation benefits from lower
column weights, we compare the two methods not only under the
all-one model but also under a \emph{Bernoulli$(1/2)$ model}, in which
these coordinates are modeled as i.i.d.\ Bernoulli$(1/2)$ random
variables.

\subsection{All-one model}

For the truncated zeta-transform evaluation, put $k=mr-s$ and define
\begin{equation}
 Z(k,p)\coloneqq \sum_{u=L+1}^{p}u\binom{k}{u},
 \qquad L=p-2^{\lfloor\log_2p\rfloor}.
 \label{eq:zeta-cost}
\end{equation}

If $a\in\F_2^k$ has Hamming weight
\[
w\coloneqq |\{i\in[k]:a_i=1\}|,
\]
then the number of active butterfly operations, namely those with
$a_i=1$, is exactly
\[
w\sum_{t=L}^{p-1}\binom{k-1}{t}.
\]
In the all-one model, $w=k$. Since
\[
 k\binom{k-1}{t}=(t+1)\binom{k}{t+1},
\]
the per-column operation count is
\[
 k\sum_{t=L}^{p-1}\binom{k-1}{t}
 =
 \sum_{u=L+1}^{p}u\binom{k}{u}
 =
 Z(k,p).
\]
Summing over the $n-mr$ contributing columns gives the following
Wiedemann cost estimate for the truncated zeta-transform evaluation:
\begin{equation}
 c_{\mathrm{Wied}}^{\mathrm{zeta}}(m,r,n,p)
 \coloneqq N_1\left(
 (n-mr)Z(mr-s,p)+N_0\log(N_0)
 \right).
 \label{eq:zeta-wiedemann}
\end{equation}

\subsection{Bernoulli$(1/2)$ model}

Suppose instead that the coordinates of each contributing binary column
are independent Bernoulli$(1/2)$ random variables. In the sparse-matrix
evaluation, an incidence at level $\ell$ requires $2^\ell$ specified
coordinates to be one and therefore occurs with probability
$2^{-2^\ell}$. In the truncated zeta-transform evaluation, butterfly
operations with $a_i=0$ may be omitted. Since $\mathbb E[w]=k/2$, the
expected per-column operation count is $Z(k,p)/2$. These expected
operation counts give the following Wiedemann cost estimates for the
sparse-matrix and truncated zeta-transform evaluations, respectively:
\begin{align}
 c_{\mathrm{Wied},1/2}(m,r,n,p)
 &\coloneqq
 N_1\left(
 N_1(n-mr)
 \sum_{\ell=1}^{\lfloor\log_2p\rfloor}
 2^{-2^\ell}\binom{p}{2^\ell}
 +N_0\log(N_0)
 \right),                                                   \label{eq:half-sparse}\\
 c_{\mathrm{Wied},1/2}^{\mathrm{zeta}}(m,r,n,p)
 &\coloneqq
 N_1\left(
 \tfrac12(n-mr)Z(mr-s,p)
 +N_0\log(N_0)
 \right).                                                   \label{eq:half-zeta}
\end{align}

Table~\ref{tab:work-estimates} evaluates the estimates for the
sparse-matrix and truncated zeta-transform evaluations under both
models for the Classic McEliece parameters and shortenings used
in~\cite[Table~5]{cryptoeprint:2026/1339}.

\begin{table}[!htbp]
\caption{Base-two logarithms of the Wiedemann work estimates.
``Reported'' reproduces the values
in~\cite[Table~5]{cryptoeprint:2026/1339}. ``Sparse'' denotes the
sparse-matrix evaluation, and ``Zeta'' denotes the truncated
zeta-transform evaluation proposed in this work.}
\label{tab:work-estimates}
\centering
\scriptsize
\resizebox{\textwidth}{!}{%
\begin{tabular}{lrrrrrrrrr}
\toprule
 & & & \multicolumn{1}{c}{\cite[Table~5]{cryptoeprint:2026/1339}} &
 \multicolumn{3}{c}{All-one} &
 \multicolumn{3}{c}{Bernoulli$(1/2)$} \\
\cmidrule(lr){4-4}\cmidrule(lr){5-7}\cmidrule(lr){8-10}
$(n,r,m)$ & $p$ & $s$ & Reported
& Sparse & Zeta & Gain
& Sparse & Zeta & Gain \\
\midrule
$(3488,64,12)$  & 22 & 150 & 297
& 297.02 & 282.93 & 14.09
& 289.19 & 281.94 &  7.25 \\
$(4608,96,13)$  & 46 & 158 & 594
& 593.77 & 559.21 & 34.56
& 577.55 & 558.22 & 19.33 \\
$(6688,128,13)$ & 52 & 256 & 690
& 693.12 & 651.93 & 41.19
& 673.43 & 650.94 & 22.48 \\
$(6960,119,13)$ & 41 & 275 & 564
& 563.74 & 532.56 & 31.18
& 548.04 & 531.57 & 16.47 \\
$(8192,128,13)$ & 39 & 321 & 550
& 549.04 & 519.24 & 29.80
& 533.55 & 518.25 & 15.30 \\
\bottomrule
\end{tabular}%
}
\end{table}

\FloatBarrier
\begingroup
\raggedright
\bibliographystyle{splncs04}
\bibliography{note_zeta_transform}
\endgroup

\end{document}